\def\BibTeX{{\rm B\kern-.05em{\sc i\kern-.025em b}\kern-.08em
    T\kern-.1667em\lower.7ex\hbox{E}\kern-.125emX}}
\newtheorem{proposition}{Proposition}
\newtheorem{definition}{Definition}
\definecolor{mygray}{gray}{0.6}
\definecolor{myblue}{rgb}{0.8,0.85,1}
\newcolumntype{L}[1]{>{\raggedright\let\newline\\\arraybackslash\hspace{0pt}}m{#1}}
\newcolumntype{C}[1]{>{\centering\let\newline\\\arraybackslash\hspace{0pt}}m{#1}}
\newcolumntype{R}[1]{>{\raggedleft\let\newline\\\arraybackslash\hspace{0pt}}m{#1}}
\begin{document}

\title{Collaborative Coded Computation Offloading: An All-pay Auction Approach}


\author{Jer Shyuan Ng$^{1,2}$, Wei Yang Bryan Lim$^{1,2}$, Sahil Garg$^{3}$, Zehui Xiong$^{2,4}$,\\ Dusit Niyato$^{4}$, Mohsen Guizani$^{5}$, and Cyril Leung$^{6,7}$\\
$^1$Alibaba Group~$^2$Alibaba-NTU JRI~$^3$École de Technologie Supérieure~$^4$SCSE, NTU, Singapore~$^5$Qatar University~\\$^6$LILY Research Center, NTU, Singapore~$^7$ECE, UBC, Canada \vspace*{-5mm}}

\maketitle

\begin{abstract} 

As the amount of data collected for crowdsensing applications increases rapidly due to improved sensing capabilities and the increasing number of Internet of Things (IoT) devices, the cloud server is no longer able to handle the large-scale datasets individually. Given the improved computational capabilities of the edge devices, coded distributed computing has become a promising approach given that it allows computation tasks to be carried out in a distributed manner while mitigating straggler effects, which often account for the long overall completion times. Specifically, by using polynomial codes, computed results from only a subset of devices are needed to reconstruct the final result. However, there is no incentive for the edge devices to complete the computation tasks. In this paper, we present an all-pay auction to incentivize the edge devices to participate in the coded computation tasks. In this auction, the bids of the edge devices are represented by the allocation of their Central Processing Unit (CPU) power to the computation tasks. All edge devices submit their bids regardless of whether they win or lose in the auction. The all-pay auction is designed to maximize the utility of the cloud server by determining the reward allocation to the winners. Simulation results show that the edge devices are incentivized to allocate more CPU power when multiple rewards are offered instead of a single reward.

\end{abstract}

\begin{IEEEkeywords}
Coded distributed computing, straggler effects mitigation, all-pay auction, Bayesian Nash equilibrium
\end{IEEEkeywords}

\section{Introduction}

The increasing sensing capabilities of Internet of Things (IoT) devices such as smart meters and smartphones, coupled with reliable wireless communication technologies, have enabled the development of crowdsensing applications, which leverage large-scale data for analysis. The IoT devices upload the data collected by the on-board sensors to the cloud server for analysis. Through deep learning techniques, the cloud server aims to build accurate decision-making models for smart city applications, such as estimation of road conditions~\cite{kalim2016crater}, air quality monitoring~\cite{dutta2017towards} and tracking of medical conditions~\cite{jovanovic2019mobile}. 

However, as the number of IoT devices increases rapidly, the cloud server may not be able to handle the ever-expanding datasets individually. Edge computing~\cite{shi2016promise} has become a promising approach that extends cloud computing services to the edge of the networks. Specifically, by leveraging on the computational capabilities, e.g., Central Processing Unit (CPU) power, of the edge devices, e.g., base stations, the cloud server can offload its computation tasks to the edge devices. Instead of offloading the entire computation to a single edge device, the cloud server can divide the large datasets into smaller sets and distribute them to the edge devices for processing. This alleviates the bottleneck and single point of failure problems. Upon completing the allocated computation subtasks, the edge devices return the computed results to the cloud server for aggregation to reconstruct the final result. 

To perform the distributed computation tasks efficiently, coded distributed computing~\cite{ng2020survey} is being increasingly used to mitigate the straggler effects caused by factors such as imbalanced work allocation, contention of shared resources and network congestion~\cite{dean2013tail,ananthanarayanan2010reining}.


In this paper, we consider that a cloud server, i.e., master, aims to build a prediction model in which the training process involves a large number of matrix multiplication computations. Matrix multiplication is an important operation underlying many data analytics applications, e.g., machine learning, scientific computing and graph processing~\cite{yu2017polynomial}. Due to a lack of computational capabilities and storage resources, the resource-constrained master employs edge devices, i.e., workers, to facilitate the distributed computation task. First, the master uses polynomial codes~\cite{yu2017polynomial} to determine the algebraic structure of the encoded submatrices and distributes them to the workers. Then, the workers perform local computations based on their allocated datasets, and return the computed results to the master. The master reconstructs the final result by using fast polynomial interpolation algorithms such as the Reed-Solomon decoding algorithm~\cite{didier2009efficient}. 

However, there is no incentive for the workers to participate in or to complete their allocated distributed computation tasks. To design an appropriate incentive mechanism, it is important to consider the unique aspect of the coded distributed computing framework. Specifically, even though workers are allocated a subset of the entire dataset for computations, some of the workers may perform computations without any compensation if their computed results are not used. As such, we propose an all-pay auction to improve worker participation. In traditional auctions such as first-price and second-price auctions, only the winners of the auctions pay. In contrast, in all-pay auctions, regardless of whether the bidders win or lose, they are required to pay. In this all-pay auction, the bids of the workers are represented by the CPU power, i.e., the number of CPU cycles, allocated by the workers to complete the computation tasks. In other words, the larger the CPU power allocated, the higher the bid of the worker. 

There are two advantages of an all-pay auction~\cite{luo2016incentive}. Firstly, it reduces the probability of non-completion of allocated tasks. This differs from traditional auctions in which the winners of the auctions can still choose not to complete their tasks and give up the reward that is promised by the auctioneer. As a result, the auctioneer needs to conduct another round of auction to meet its objectives. Secondly, it reduces the coordination cost between the auctioneer and the bidders. Specifically, in traditional auctions, the participants need to bid then contribute whereas in all-pay auctions, the bids of the participants are directly determined by their contributions. In other words, the participants do not need to bid explicitly in all-pay auctions.

\section{System Model}
\label{sec:system}

We consider a master-worker setup in a heterogeneous distributed computing network. The system model consists of a master, e.g., cloud server and a set $\mathcal{I}=\{1,\ldots,i,\ldots,I\}$ of $I$ workers, e.g., edge devices such as base stations, which have different computational capabilities. In crowdsensing, for example, the ubiquity of the IoT devices as well as their on-board sensing and processing capabilities are leveraged to collect data for many innovative crowdsensing applications, e.g., weather prediction, location-based recommendation and forensic analysis. Given the large amount of sensor data collected from different IoT devices, the master aims to perform training of the machine learning algorithm to complete a user-defined task. As the number of IoT devices that contribute to the crowdsensing task increases, so does the size of the dataset that the master needs to handle. However, the master may not have sufficient resources, i.e., computation power to handle the growing dataset. Instead, it may utilize the resources of the workers to collaboratively complete the distributed computation tasks.

One of the main challenges in performing distributed computation tasks is the straggler effects. Coding techniques such as polynomial codes \cite{yu2017polynomial} can be used to mitigate straggler effects by reducing the recovery threshold, i.e., the number of workers that need to submit their results for the master to reconstruct the final result. In order to perform distributed coded matrix multiplication computations, i.e., $\mathbf{C}=\mathbf{A}^\top \mathbf{B}$ where $\mathbf{A}$ and $\mathbf{B}$ are input matrices, $\mathbf{A} \in \mathbb{F}_q^{s\times r}$ and $\mathbf{B} \in \mathbb{F}_q^{s\times t}$ for integers $s$, $r$, and $t$ and a sufficiently large finite field $\mathbb{F}_q$, there are four important steps:

\begin{enumerate}
\item \emph{Task Allocation: }Given that all workers are able to store up to $\frac{1}{m}$ fraction of matrix $\mathbf{A}$ and $\frac{1}{n}$ fraction of matrix $\mathbf{B}$, the master divides the input matrices into submatrices $\mathbf{\tilde{A}}_i=f_i(\mathbf{A})$ and $\mathbf{\tilde{B}}_i=g_i(\mathbf{B})$, $\forall i \in \mathcal{I}$, where $\mathbf{\tilde{A}}_i \in \mathbb{F}_q^{s\times \frac{r}{m}}$ and $\mathbf{\tilde{B}}_i \in \mathbb{F}_q^{t\times \frac{r}{n}}$ respectively. Specifically, $\mathbf{f}$ and $\mathbf{g}$ represent the vectors of functions such that $\mathbf{f}=(f_1,\ldots,f_i,\ldots,f_I)$ and $\mathbf{g}=(g_1,\ldots,g_i,\ldots,g_I)$, respectively. Then, the master distributes the submatrices to the workers over the wireless channels for computations.

\item \emph{Local Computation: }Each worker $i$ is allocated submatrices $\mathbf{\tilde{A}}_i$ and $\mathbf{\tilde{B}}_i$ by the master. Based on the allocated submatrices, the workers perform matrix multiplication, i.e., $\mathbf{\tilde{A}}_i^\top \mathbf{\tilde{B}}_i$, $\forall i \in \mathcal{I}$.

\item \emph{Wireless Transmission: }Upon completion of the local computations, the workers transmit the computed results, i.e., $\mathbf{\tilde{C}}_i=\mathbf{\tilde{A}}_i^\top \mathbf{\tilde{B}}_i$ to the master over the wireless communication channels.

\item \emph{Reconstruction of Final Result: }By using polynomial codes, the master is able to reconstruct the final result upon receiving $K$ out of $I$ computed results by using decoding functions. In other words, the master does not need to wait for all $I$ workers to complete the computation task. Note that although there is no constraint on the decoding functions to be used with polynomial codes, a low-complexity decoding function ensures the efficiency of the overall matrix multiplication computations.
\end{enumerate}

In our system model, we consider the computations for distributed matrix multiplication. However, there are other types of distributed computation problems, e.g., stochastic gradient descent, convolution and Fourier transform. The matrix multiplication may also involve more than two matrices. Our system model can be easily extended to solve the matrix multiplication of more than two matrices and different computation problems.

By using polynomial codes \cite{yu2017polynomial}, the optimum recovery threshold that can be achieved where each worker is able to store up to $\frac{1}{m}$ of matrix $\mathbf{A}$ and $\frac{1}{n}$ of matrix $\mathbf{B}$ is defined as:
\begin{equation}
K=mn.
\end{equation}

However, there is no incentive for the workers to be one of the $K$ workers to complete their local computations and return their computed results to the master. The workers may not want to participate in the distributed computation tasks or perform to the best of their abilities. For example, the workers may not allocate large CPU power for the coded distributed tasks, resulting in longer time needed to complete the computation tasks. In order to incentivize the workers to allocate more CPU power to complete the allocated computation tasks, we adopt an all-pay auction approach to determine the rewards to the workers while maximizing the utility of the master.

\subsection{Utility of the Master}

Given that the master only needs the computed results from $K$ workers to reconstruct the final result, the master offers $K$ rewards, represented by the set $\mathcal{K}=\{1,\ldots, k,\ldots, K\}$ where $K \leq I$. Specifically, there are $K$ rewards for which $I$ workers compete. Since only $K$ rewards are offered, $I-K$ workers do not receive any reward from the master, even though they perform the matrix multiplication computations given the allocated submatrices.

The size of reward $k$ is represented by $M_k$. The worker that allocates larger CPU power is offered larger reward. In particular, the worker that allocates the largest amount of CPU power receives a reward of $M_1$, the worker with the second largest allocation receives reward $M_2$ and the worker with the $k$-th largest allocation of CPU power is offered reward $M_k$. If two or more workers allocate the same amount of CPU power to perform the distributed computation tasks, ties will be randomly broken. In other words, if both workers are ranked $k$, one is ranked $k$ and the other is ranked $k+1$. Hence, without loss of generality, $M_1\geq M_2 \geq \cdots \geq M_K > 0$. The total amount of reward offered by the master is denoted by $\sigma$, i.e., $\sigma=\sum_{k=1}^{K}M_{k}$. The aim of the master is to spend the entire fixed reward to maximize the CPU power allocated by the workers. 

As such, the expected utility of the master, $\pi$ is expressed as follows:
\begin{equation}
\label{profit}
\pi=\mathbb{E}[z_{1:I}+z_{2:I}+\cdots+z_{K:I}-\sigma],
\end{equation}
where $z_{k:I}$ represents the order statistics of the worker's CPU power allocation. Specifically, $z_{1:I}$ and $z_{k:I}$ denote the highest and $k$-th highest CPU power allocation respectively among $I$ workers.

\subsection{Utility of the Worker}

To perform the local computations on the allocated submatrices, each worker $i$ consumes computational energy, $e_i$, which is defined as:
\begin{equation}
e_i=\kappa a_{i}(z_i)^2,
\end{equation} 
where $\kappa$ is the effective switch coefficient that depends on the chip architecture \cite{zhang2018energy}, $a_{i}$ is the total number of CPU cycles required to complete the allocated computation subtask and $z_i$ is the CPU power allocated by worker $i$ for the computation subtask. By using the polynomial codes, the computation task is evenly partitioned and distributed among all workers. As a result, the total number of CPU cycles that are needed to complete the allocated computation tasks is the same for all workers, i.e., $a_{i}=a, \; \forall i \in \mathcal{I}$. The unit cost of computational energy incurred by worker $i$, $\forall i \in\mathcal{I}$ is denoted by $\theta$, where the unit cost of computational energy is the same for all workers.


Each worker $i$ has a valuation $v_{i}$ for the total reward $\sigma$. The workers' valuations, $v_i$,  $\forall i \in \mathcal{I}$ are independently drawn from $v_i \in [\underline{v},\bar{v}]$ such that $\underline{v}$ and $\bar{v}$ are strictly positive based on $F(v)$, where $F(v)$ is the cumulative distribution function (CDF) of $v$. The total cost of worker $i$ is represented by $\theta e_{i}$. As a result, the utility of worker $i$ for winning reward $M_{k}$, $\forall k \in \mathcal{K}$, is expressed as:
\begin{equation}
\label{eqn:payoff}
\alpha_{i}=
	\begin{cases}
   	v_{i}M_{k}-\theta e_{i}, & \text{if worker $i$ wins $M_{k}$ reward}, \\
   	-\theta e_{i}, & \text{otherwise}.
 	\end{cases}
\end{equation}

\section{All-pay Auction}

In this section, we present the design of an all-pay auction that incentivizes the workers to allocate more CPU power for the allocated computation subtasks. In this auction, the master is the auctioneer whereas the workers are the bidders. The bid of a worker is represented by the CPU power that it allocates for its computation subtasks. In this all-pay auction, all $I$ workers, i.e., bidders, pay their bids regardless of whether they win or lose the auction. 

Each worker $i$ knows its own valuation, $v_{i}$ but does not know the valuation of any other worker, $i'\neq i$. This establishes a one-dimensional incomplete information setting. In addition, if each worker has a different unit cost of computational energy which is only known to itself, we consider the two-dimensional incomplete information setting. The dimension of private information can be reduced following the procedure in~\cite{yoon2012optimal}. In this work, we consider a one-dimensional incomplete information setting where the unit cost of computational energy is the same for all workers but the workers' valuations are heterogeneous and private. 

Given the utility of worker $i$, $\alpha_i$ in Equation~(\ref{eqn:payoff}), the objective of worker $i$ to maximize its expected utility, $u_{i}$, is defined as follows:
\begin{equation}
\label{eqn:utility}
\max_{z_i}u_{i}=v_{i}\sum_{k=1}^Kp_{i}^k{M_{k}}-\theta\kappa a(z_{i})^2,
\end{equation}
where $p_{i}^k$ is the winning probability of reward $M_k$ by worker~$i$.

Although the worker does not know exactly the valuations of other workers, it knows the distribution of the other workers' valuations based on past interactions, which is a common knowledge to all workers and the master. In our model, we consider that all the workers' valuations are drawn from the same distribution, which constitutes a symmetric Bayesian game where the prior is the distribution of the workers' valuations.

\begin{definition} \cite{tie2014optimal} A pure-strategy Bayesian Nash equilibrium is a strategy profile $\mathbf{z}^*=(z_1^*,\ldots,z_i^*,\ldots,z_I^*)$ that satisfies
$$u_{i}(z_{i}^*,\mathbf{z}_{i'}^*) \geq u_{i}(z_{i},\mathbf{z}_{i'}^*), \; \forall i \in \mathcal{I}. $$
\end{definition}
The subscript $i'$ represents the index of other workers other than worker $i$. Specifically, $\mathbf{z}_{i'}^*=(z_1^*,z_2^*,\ldots,z_{i-1}^*,z_{i+1}^*,\ldots,z_I^*)$ represents the equilibrium CPU power allocations of all other workers other than CPU power allocation of worker $i$. At Bayesian Nash equilibrium, given the belief of worker $i$, $\forall i \in \mathcal{I}$ about the valuations and that the CPU powers allocated by other workers, $i'$ where $i \neq i'$ are at equilibrium, $\mathbf{z}^*_{i'}$, worker $i$ aims to maximize its expected utility. 

\begin{proposition}Under incomplete information setting, the all-pay auction admits a pure-strategy Bayesian Nash equilibrium that is strictly monotonic where the bid of a worker strictly increases in its valuation.
\end{proposition}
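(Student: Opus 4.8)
The plan is to construct the equilibrium explicitly by exploiting the symmetry of the Bayesian game and then verify that the constructed strategy is a genuine best response. First I would posit that every worker other than $i$ follows a common, differentiable, strictly increasing strategy $\beta:[\underline{v},\bar{v}]\to\mathbb{R}$ mapping a valuation to a CPU-power bid, and compute the winning probabilities via order statistics. Because $\beta$ is strictly increasing, worker $i$ attains rank $k$ exactly when $k-1$ of the other $I-1$ workers hold a higher valuation and $I-k$ hold a lower one, so a worker behaving as type $\hat{v}$ wins reward $M_k$ with probability
\[
p^k(\hat{v})=\binom{I-1}{k-1}\bigl[1-F(\hat{v})\bigr]^{k-1}\bigl[F(\hat{v})\bigr]^{I-k}.
\]
Writing $\Phi(\hat{v})=\sum_{k=1}^{K}M_k\,p^k(\hat{v})$ for the expected reward, the expected utility of a true type $v$ who mimics type $\hat{v}$ becomes $u(v,\hat{v})=v\,\Phi(\hat{v})-\theta\kappa a\,\beta(\hat{v})^2$ by Equation~(\ref{eqn:utility}).

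Next I would impose the equilibrium (incentive-compatibility) condition that mimicking one's own type be optimal, i.e.\ $\partial u(v,\hat{v})/\partial\hat{v}=0$ at $\hat{v}=v$, which yields the ordinary differential equation $\theta\kappa a\,\frac{d}{dv}\bigl[\beta(v)^2\bigr]=v\,\Phi'(v)$. The boundary condition is $\beta(\underline{v})=0$: since $\beta$ is strictly increasing, the lowest type is always ranked last and cannot improve its rank by bidding more, so any positive CPU power is strictly wasteful for it. Integrating then gives the closed form
\[
\beta(v)=\sqrt{\frac{1}{\theta\kappa a}\int_{\underline{v}}^{v}x\,\Phi'(x)\,dx}.
\]

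The strict-monotonicity claim reduces to showing $\Phi'\geq 0$, with strict inequality on the interior, and this is where the reward ordering $M_1\geq\cdots\geq M_K>0$ does the work: raising $\hat{v}$ raises $F(\hat{v})$, which first-order stochastically improves worker $i$'s rank (shifts probability mass toward smaller $k$), and because the rewards are non-increasing in rank this strictly raises the expected reward whenever $F$ has full support on $[\underline{v},\bar{v}]$. I expect this stochastic-dominance step to be the main obstacle, since it has to be argued for the entire profile of $K$ rewards rather than for a single prize. Granting it, the integrand $x\,\Phi'(x)$ is strictly positive (valuations are strictly positive), so $\beta$ is well defined, real, and strictly increasing, which is exactly the monotonicity asserted.

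Finally I would verify that the stationary point is a global best response rather than a mere local one, which simultaneously rules out deviations to arbitrary bids. Substituting the ODE back into the derivative produces the clean expression $\partial u(v,\hat{v})/\partial\hat{v}=(v-\hat{v})\,\Phi'(\hat{v})$, whose sign is positive for $\hat{v}<v$ and negative for $\hat{v}>v$; hence $u(v,\cdot)$ is single-peaked at $\hat{v}=v$ and no in-range deviation is profitable. A bid above $\beta(\bar{v})$ only increases cost while already securing $M_1$ with probability one, and a bid below $\beta(\underline{v})=0$ is infeasible, so out-of-range deviations are unprofitable as well. This confirms that $\beta$ is a symmetric, strictly monotonic pure-strategy Bayesian Nash equilibrium, establishing the proposition.
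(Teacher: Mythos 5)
Your proposal is correct, and it is essentially the construction the paper itself relies on: the paper omits the proof of this proposition ``due to space constraints,'' but its subsequent derivation --- the expected utility in Equation~(\ref{eqn:utilityprob}), the first-order condition in Equation~(\ref{differentiate}), and the resulting closed form for $\beta$ --- coincides with your first two steps, with your binomial winning probability $\binom{I-1}{k-1}\bigl[1-F(\hat{v})\bigr]^{k-1}\bigl[F(\hat{v})\bigr]^{I-k}$ being exactly the paper's $F_{k:I-1}(\hat{v})-F_{k-1:I-1}(\hat{v})$. What you add, and what the paper never supplies, is precisely what turns the derivation into a proof: (i) the monotonicity of the expected reward $\Phi$, which by Abel summation equals $\sum_{k=1}^{I-1}(M_{k}-M_{k+1})F_{k:I-1}(\hat{v})$ up to an additive constant, so $\Phi'\geq 0$ holds term by term and is strict wherever $f>0$ because $M_{K}>M_{K+1}=0$ contributes at least one strictly positive coefficient when $K<I$; and (ii) the sufficiency verification $\partial u(v,\hat{v})/\partial\hat{v}=(v-\hat{v})\,\Phi'(\hat{v})$, which upgrades stationarity to a global best response over all feasible bids, including bids outside the range of $\beta$. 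The paper's text only imposes the first-order condition and asserts the equilibrium; your single-peakedness argument is the missing closing step.

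Two small repairs are needed. First, your justification of the boundary condition --- that the lowest type ``cannot improve its rank by bidding more'' --- is inaccurate as stated, since bidding above $\beta(v')$ for some $v'>\underline{v}$ does improve its rank; the correct argument is your own expression $(v-\hat{v})\,\Phi'(\hat{v})$ evaluated at $v=\underline{v}$, which is nonpositive for all $\hat{v}>\underline{v}$ and shows any positive bid is unprofitable for the lowest type (this matches the paper's case analysis around $u_i(\underline{v})=0$). Second, strict monotonicity genuinely requires $K<I$ (or at least one strict decrease among the first $I$ rewards) together with full support of $F$ on $[\underline{v},\bar{v}]$: if $K\geq I$ and the rewards are homogeneous, $\Phi$ is constant, every worker wins a reward regardless of its bid, and the equilibrium degenerates to zero bids, so the proposition's implicit assumption should be made explicit in your write-up.
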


\begin{proof}
The proof is omitted due to space constraints.
\end{proof}

Since the equilibrium CPU power allocation of worker~$i$, which is represented by $z_{i}^*$, is a strictly monotonically increasing function of its valuation $v_{i}^*$, we express the equilibrium strategy of worker $i$ as a function represented by $\beta(\cdot)$, i.e., $z_{i}^*=\beta_{i}(v_{i})$. Given the strict monotonicity, the inverse function also exists where $v_i(\cdot)=\beta_i^{-1}(\cdot)$ and it is an increasing function. Due to the incomplete information setting, the objective of worker $i$ to maximize its expected utility in Equation~(\ref{eqn:utility}) can be expressed as follows:
\begin{equation}
\max_{z_i}u_{i}=v_{i}\sum_{k=1}^Kp_{i}^k(z_{i},\beta_{i'}(v_{i'})){M_{k}}-c(\beta(v_{i})),
\end{equation}
where the cost of worker $i$ is represented by the function $c(\cdot)=\theta\kappa a(\beta(v_{i}))^2$.

Since the workers are symmetric, i.e., the valuations of workers are drawn from the same distribution, the symmetric equilibrium strategy for each worker $i$, $\forall i \in \mathcal{I}$ can be derived. We first assume that there are $I$ rewards, where $M_{1}\geq M_{2}\geq \cdots \geq M_{K}>M_{K+1}=M_{K+2}=\cdots=M_{I}=0$. The valuations of the workers, $v_1,\ldots, v_{i},\ldots, v_{I}$ are ranked and represented by its order statistics, which are expressed as $v_{1:I}\geq v_{2:I} \geq\cdots\geq v_{I:I}$. In particular, $v_{k:I}$ represents the $k$-th highest valuation among the $I$ valuations which are drawn from the common distribution $F(v)$. Given the order statistics of the workers' valuations, $\forall i \in \mathcal{I}$, the corresponding cumulative distribution function and probability density function are represented by $F_{k:I}$ and $f_{k:I}$ respectively. Similarly, when dealing with the valuations of all workers, other than that of worker $i$, the order statistic is represented by $v_{k:I-1}$, which represents the $k$-th highest valuation among the $I-1$ valuations. The corresponding cumulative distribution function and probability density function are represented by $F_{k:I-1}$ and $f_{k:I-1}$ respectively.

Given that other workers $i'$, where $i' \neq i$, follow a symmetric, increasing and differentiable equilibrium strategy $\beta(\cdot)$, worker $i$ will never choose to allocate a CPU power greater than the equilibrium strategy given the highest valuation. In other words, worker $i$ will never allocate $z_i>\beta(\bar{v})$. Besides, the optimal strategy of the worker with lowest valuation $\underline{v}$ is not to allocate any CPU power. On one hand, when the number of rewards offered is smaller than the number of workers, i.e., $K<I$, the worker with lowest valuation $\underline{v}$ will not win any reward. On the other hand, when the number of rewards offered is larger than or equal the number of workers, i.e., $K\geq I$, the worker with lowest valuation $\underline{v}$ will win a reward without allocating any CPU power. Hence, $u_i({\underline{v}})=0$. With this, the expected utility of worker $i$ with valuation $v_{i}$ and CPU power allocation $z_i=\beta(v_i)$ is expressed as follows:
\begin{equation}
\label{eqn:utilityprob}
u_{i}=v_{i}\sum_{k=1}^I[F_{k:I-1}(v_i)-F_{k-1:I-1}(v_i)]{M_{k}}-c(\beta(v_i)),
\end{equation}
since $M_{k+1}=\cdots=M_{I-1}=M_{I}=0$, $F_{0:I-1}(z_i)\equiv0$ and $F_{I:I-1}(z_i)\equiv1$. 

By differentiating Equation~(\ref{eqn:utilityprob}) with respect to the variable $w_i$ and equating the result to zero, we obtain the following:
\begin{equation}
\label{differentiate}
0=v_{i}\sum_{k=1}^I[f_{k:I-1}(v_i)-f_{k-1:I-1}(v_i)]{M_{k}}-c'(\beta(v_i))\beta'(v_i).
\end{equation}

When maximized, the marginal value of the reward is equivalent to the marginal cost of the CPU power. Since we have the differentiated function $c'(\cdot)$, the function $c(\cdot)$ can be found by using the integral of Equation~(\ref{differentiate}). At equilibrium, when the expected utility of worker $i$, $\forall i \in \mathcal{I}$, is maximized, we have the following:
\begin{equation}
\begin{split}
c(\beta(v_i))&=\sum_{k=1}^{I}M_{k}\int_{\underline{v}}^{v_i}v_{i}[f_{k:I-1}(v_{i})-f_{k-1:I-1}(v_{i})]dv_{i}\\
&=\sum_{k=1}^{I-1}(M_{k}-M_{k+1})\int_{\underline{v}}^{v_i}v_{i}f_{k:I-1}(v_{i})dv_{i}.
\end{split}
\end{equation}

Thus the equilibrium strategy for worker $i$ with valuation $v_{i}$, $\forall i \in \mathcal{I}$, is expressed as:
\begin{equation}
z_{i}^*=\beta(v_{i})=c^{-1}\left(\sum_{k=1}^{I-1}(M_{k}-M_{k+1})\int_{\underline{v}}^{v_i}v_{i}f_{k:I-1}(v_{i})dv_{i} \right).
\end{equation}

Given the equilibrium strategy of worker $i$, $\forall i \in \mathcal{I}$, the master aims to maximize its expected utility, $\pi$. By using the polynomial codes, the master is able to reconstruct the final result by using the computed results from $K$ workers. Since the master spends the fixed reward $\sigma$ completely, the maximization problem in Equation~(\ref{profit}) is equivalent to maximizing the allocation of CPU power, which is expressed as follows: 
\begin{multline}
\pi=\mathbb{E}[\beta(v_{1:I})+\beta(v_{2:I})+\cdots+\beta(v_{K:I})]\\
\shoveleft{=\sum_{i=1}^K\int_{\underline{v}}^{\bar{v}}\beta(v)dF_{i:I}(v)}\\
\shoveleft{=K\int_{\underline{v}}^{\bar{v}}\beta(v)dF(v)}\\
\shoveleft{=K\int_{\underline{v}}^{\bar{v}}c^{-1}\left(\sum_{k=1}^{I-1}(M_{k}-M_{k+1})\int_{\underline{v}}^{v}vf_{k:I-1}(v)dv \right)dF(v)}\\
\shoveleft{=K\int_{\underline{v}}^{\bar{v}}c^{-1}(\sum_{k=1}^{I}M_{k}\int_{\underline{v}}^{v}v[f_{k:I-1}(v)}\\
{ -f_{k-1:I-1}(v)dv])dF(v)}.
\end{multline}

Since the equilibrium strategy of worker $i$, $\forall i \in \mathcal{I}$, is affected by the reward structure, the master needs to determine the structure of the rewards such that it maximizes the CPU power allocation of the workers, thereby maximizing its own utility, $\pi$.

\section{Reward Structure}

Given that the master spends the total amount of the reward,~$\sigma$, the design of the optimal reward sequence is important to maximize the CPU power allocation of the workers since the equilibrium strategies of the workers depend on the differences between consecutive rewards.

The master needs to first decide whether to allocate the total amount of reward, $\sigma$ to only one winner, i.e., winner-take-all reward structure, or to split the reward into several smaller rewards.

\begin{proposition}

Given that the cost functions are convex, it is not optimal to only offer one reward where $M_1=\sigma$ and $ M_2=\cdots=M_K=\cdots=M_I=0$ since $\frac{\partial \pi}{\partial M_{k-1}}-\frac{\partial \pi}{\partial M_{k}} <0$, for $k=2,\ldots,I$. In particular, if $\frac{\partial \pi}{\partial M_{1}}-\frac{\partial \pi}{\partial M_{2}} <0$, it is not optimal to only offer a reward. 
\end{proposition}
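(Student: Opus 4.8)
The plan is to exploit the closed form already derived for $\pi$ and recast the claim as a first-order ascent-direction condition. Writing $\Phi(v)=\sum_{k=1}^{I-1}(M_{k}-M_{k+1})G_{k}(v)$ with the truncated order-statistic moments $G_{k}(v)=\int_{\underline{v}}^{v}x f_{k:I-1}(x)\,dx$, the master's utility is $\pi=K\int_{\underline{v}}^{\bar{v}}c^{-1}(\Phi(v))\,dF(v)$. Since $c(z)=\theta\kappa a z^{2}$ is convex and increasing, $c^{-1}$ is concave and increasing, and because $\Phi$ is linear in the rewards, $\pi$ is a \emph{concave} function of the reward vector on the feasible set $\{\sum_k M_k=\sigma,\ M_1\ge\cdots\ge M_I\ge 0\}$. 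Hence the winner-take-all vertex $M_1=\sigma$, $M_2=\cdots=M_I=0$ fails to be optimal precisely when some feasible direction is an ascent direction; the natural candidate is shifting mass from $M_1$ to $M_2$, which improves $\pi$ iff $\frac{\partial\pi}{\partial M_{1}}-\frac{\partial\pi}{\partial M_{2}}<0$. The whole statement therefore reduces to signing this one derivative difference, the general index $k$ being analogous.

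First I would differentiate under the integral sign. From $\frac{\partial\Phi}{\partial M_1}=G_1$ and $\frac{\partial\Phi}{\partial M_2}=G_2-G_1$ one obtains
\begin{equation}
\frac{\partial\pi}{\partial M_{1}}-\frac{\partial\pi}{\partial M_{2}}=K\int_{\underline{v}}^{\bar{v}}(c^{-1})'(\Phi(v))\,\psi(v)\,dF(v),\quad \psi(v)=2G_{1}(v)-G_{2}(v),
\end{equation}
and at the winner-take-all point $\Phi(v)=\sigma G_{1}(v)$. Two monotonicity facts drive the sign. The weight $(c^{-1})'(\sigma G_1(v))$ is strictly \emph{positive}, and since $c^{-1}$ is concave while $G_1$ is increasing in $v$, it is a \emph{decreasing} function of $v$ (for the quadratic cost it is proportional to $(\sigma G_1(v))^{-1/2}$, which in fact blows up as $v\to\underline{v}$). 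I would then establish the single-crossing shape of $\psi$: differentiating, $\psi'(v)=v\big(2f_{1:I-1}(v)-f_{2:I-1}(v)\big)$, and substituting the order-statistic densities gives $2f_{1:I-1}-f_{2:I-1}=(I-1)F^{\,I-3}f\,[\,IF-(I-2)\,]$, negative for $F(v)<\frac{I-2}{I}$ and positive afterwards. Thus $\psi$ starts at $\psi(\underline{v})=0$, dips into negative values, and rises back to $\psi(\bar v)=2\mathbb{E}[v_{1:I-1}]-\mathbb{E}[v_{2:I-1}]>0$: it crosses zero exactly once, from below.

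The crux is to combine these into strict negativity. The natural tool is a correlation (Chebyshev/rearrangement) inequality: for a decreasing weight $g$ and a function $\psi$ that is negative then positive (single crossing at $v^{*}$), $\int g\psi\,dF\le g(v^{*})\int\psi\,dF$, because $g-g(v^{*})$ and $\psi$ have opposite signs throughout. Here, however, the unweighted mean is itself \emph{positive}: using the identity $(1-F)f_{k:I-1}=\tfrac{k}{I}f_{k+1:I}$ one computes $\int_{\underline{v}}^{\bar v}\psi\,dF=\frac{2}{I}\big(\mathbb{E}[v_{2:I}]-\mathbb{E}[v_{3:I}]\big)>0$. So the negativity cannot come from the mean; it must come entirely from the steepness of the decreasing weight, which over-weights the negative part of $\psi$ near $\underline{v}$. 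This is exactly where convexity of $c$ is indispensable: for a linear cost $(c^{-1})'$ is constant, the difference reduces to $K\int\psi\,dF>0$, and winner-take-all would be optimal.

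The main obstacle, then, is the quantitative balancing in this last step: proving that the mass the decreasing weight places on $\{\psi<0\}$ (where $g\sim(\sigma G_1)^{-1/2}\to\infty$) strictly outweighs the positive tail, so that $\int(c^{-1})'(\sigma G_1)\psi\,dF<0$. I would handle this by splitting at the crossing point $v^{*}$, bounding the positive tail via $g(v)\le g(v^{*})$ there, and lower-bounding the negative part using the local expansions $G_1(v)\sim C(v-\underline{v})^{I-1}$ and $\psi(v)\sim -C'(v-\underline{v})^{I-2}$ near $\underline{v}$; the remaining work is to verify the resulting inequality, which is where a condition such as $I\ge 3$ together with sufficient prize-curvature enters. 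Finally, the general claim $\frac{\partial\pi}{\partial M_{k-1}}-\frac{\partial\pi}{\partial M_{k}}<0$ follows by the same template with $\psi_{k}(v)=2G_{k-1}(v)-G_{k-2}(v)-G_{k}(v)$, whose sign pattern comes from $\psi_k'(v)=v\big(2f_{k-1:I-1}-f_{k-2:I-1}-f_{k:I-1}\big)$ and the same density identities.
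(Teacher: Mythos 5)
Your setup is sound, and it matches the procedure the paper itself gestures at (the paper omits the details and defers to the perturbation argument of Yoon/Moldovanu--Sela). Your intermediate computations check out: $\frac{\partial\pi}{\partial M_{1}}-\frac{\partial\pi}{\partial M_{2}}=K\int_{\underline{v}}^{\bar{v}}(c^{-1})'(\sigma G_{1}(v))\,\psi(v)\,dF(v)$ with $\psi=2G_{1}-G_{2}$ at the winner-take-all point; the density identity $2f_{1:I-1}-f_{2:I-1}=(I-1)F^{I-3}f\,[IF-(I-2)]$; the single-crossing shape of $\psi$; and, via $(1-F)f_{k:I-1}=\frac{k}{I}f_{k+1:I}$, the positivity of the unweighted mean $\int\psi\,dF=\frac{2}{I}\bigl(\mathbb{E}[v_{2:I}]-\mathbb{E}[v_{3:I}]\bigr)$. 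You also correctly isolate why convexity matters at all (for linear cost the weight is constant and winner-take-all survives the perturbation test).

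However, the step you defer --- ``verify the resulting inequality'' by splitting at the crossing point --- is not a routine verification, and this is a genuine gap: the inequality you would need is \emph{false} in general, even under the paper's own parametrization. Take the paper's quadratic cost $c(z)=\theta\kappa a z^{2}$, so $(c^{-1})'(y)\propto y^{-1/2}$, and uniform valuations on $[0,1]$. Then $G_{1}(v)=\frac{I-1}{I}v^{I}$ and $\psi(v)=v^{I-1}\bigl[(I-1)v-(I-2)\bigr]$, so
\begin{equation}
\frac{\partial\pi}{\partial M_{1}}-\frac{\partial\pi}{\partial M_{2}}\;\propto\;\int_{0}^{1}v^{I/2-1}\bigl[(I-1)v-(I-2)\bigr]\,dv\;=\;\frac{2(I-1)}{I+2}-\frac{2(I-2)}{I}\;=\;\frac{2(4-I)}{I(I+2)},
\end{equation}
which is strictly \emph{positive} for $I=3$, zero for $I=4$, and negative only for $I\geq 5$. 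Hence no rearrangement or local-expansion argument can deliver the claimed strict negativity from convexity alone; the proposition holds only under additional hypotheses on $I$, $F$, and the curvature of $c$ (for uniform valuations and quadratic cost, $I\geq 5$, not your suggested $I\geq 3$). Your own closing sentence anticipates that some condition ``enters,'' but a proof must state and use it, and your sketch never identifies the correct threshold or shows the balancing inequality under it. Two smaller points: your concavity observation is only needed for the converse direction (a strict feasible ascent direction already refutes optimality without concavity), and your expansion $G_{1}(v)\sim C(v-\underline{v})^{I-1}$ presumes $\underline{v}>0$, whereas the paper simulates $\underline{v}=0$, where $G_{1}(v)\sim Cv^{I}$.
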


\begin{proof}

Following the procedure in~\cite{yoon2012optimal}, we show that it is not optimal to offer a single reward given the cost functions of the workers are convex. The details of the proof are omitted due to space constraints.
\end{proof} 

Since the winner-take-all reward structure is not optimal, the master is better off offering multiple rewards. Given that $K$ rewards are offered, the master can consider several reward sequences such as (i) homogeneous reward sequence, (ii) arithmetic reward sequence and (iii) geometric reward sequence. Specifically, the reward sequence is expressed as follows:
\begin{itemize}
\item Homogeneous reward sequence: $M_{k}=M_{k+1}$,
\item Arithmetic reward sequence: $M_{k}-M_{k+1}=\gamma$, $\gamma>0$,
\item Geometric reward sequence: $M_{k+1}=\eta M_{k}$, $0\leq\eta\leq 1$,
\end{itemize}
where $\gamma$ and $\eta$ are constants.

In the next section, we show the effects of different reward structures on the allocation of CPU powers by the workers.

\section{Simulation Results}

In this section, we evaluate the all-pay auction mechanism. Table \ref{tab:simulation} summarizes the simulation parameters. With normalization, we consider the total amount of reward $\sigma$ to be $1$, i.e., $\sigma=\sum_{k=1}^{K}M_{k}=1$. We also set $m=n=2$ (see ``Task Allocation'' step in Section~\ref{sec:system}).
\begin{table}[h]
\caption{Simulation Parameters.} 
\label{tab:simulation}
\centering
\begin{tabular}{p{5.5cm} | p{2.2cm} }
\hline \hline
\textbf{Parameter}& \textbf{Values}\\ [0.5ex]
 \hline 
Unit cost of computational energy, $\theta$ & 1\\
Effective switch coefficient, $\kappa$ \cite{hao2018energy} & $10^{-25}$\\
Total number of CPU cycles required, $a$ & $5\times 10^{12}$\\
Valuation of worker $i$, $v_{i}$ & $\sim U[0,1]$\\

\hline 
\end{tabular}
\end{table}

\subsection{Monotonic Behaviour of Workers}

In the simulations, we consider a uniform distribution of the workers' valuation for the rewards, where $v_i \in [0,1]$ which are independently drawn from $F(v)=v$. From Figs.~\ref{fig:onereward}-\ref{fig:diffrewards}, it can be observed that the worker CPU power allocation increases monotonically with its valuation. Specifically, the higher the valuation of the workers for the rewards, the larger the amount of CPU power allocated for the computation subtasks. Since the workers are symmetric where their valuations are drawn from the same distribution, the workers with the same valuation contribute the same amount of CPU power.

\begin{figure*}
\centering
\begin{multicols}{2}
\includegraphics[width=0.7\columnwidth]{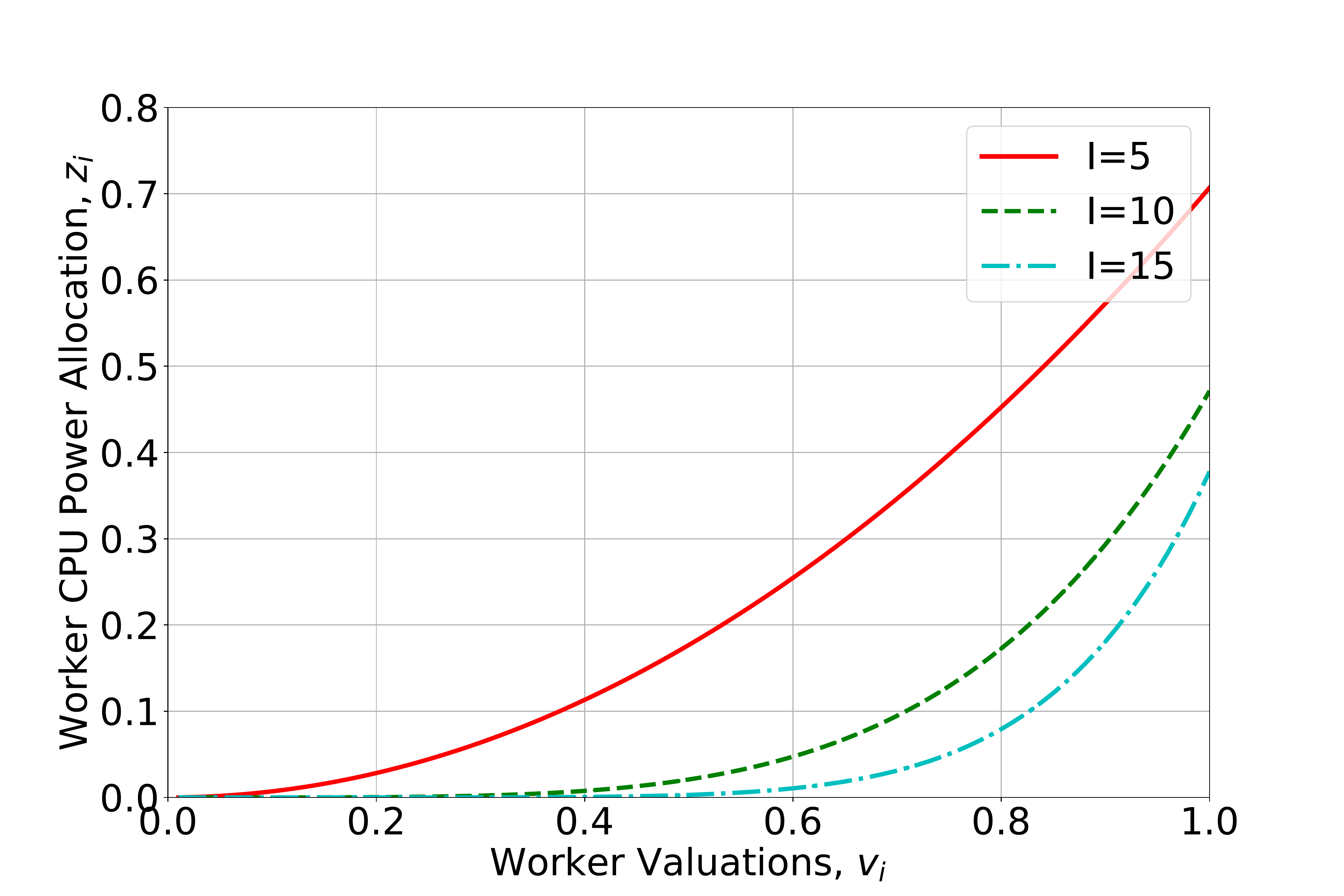}
\caption{\footnotesize{Only one reward is offered to the worker with the largest CPU power allocation.}}
\label{fig:onereward}

\includegraphics[width=0.7\columnwidth]{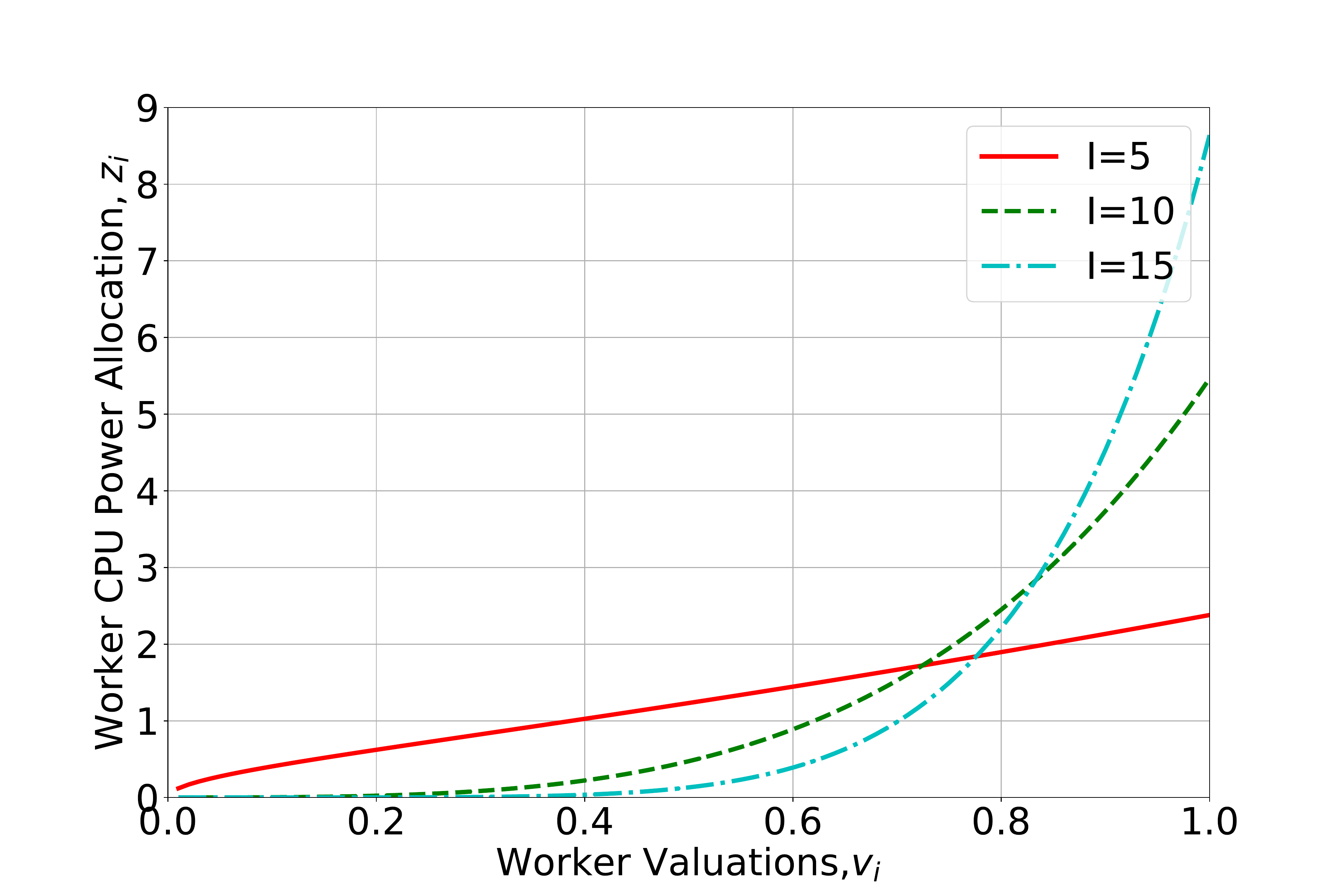}
    \caption{\footnotesize{The difference between consecutive reward amounts is by a factor of 0.8, $M_{k+1}=0.8M_{k}$.}}
    \label{fig:multigeom}

\end{multicols}
\end{figure*}

\subsection{Winner-take-all}

Based on the different reward structure adopted by the master, the workers allocate their CPU power accordingly. Figure~\ref{fig:onereward} shows that when there are 5 workers and only one reward is offered to the worker that allocates the largest amount of CPU power, the worker with the highest valuation of 1, i.e., $v_{i}=1$, is only willing to contribute $0.71$W of CPU power. However, when the master offers multiple rewards, the worker with the same valuation of 1 is willing to contribute as high as $2.38$W, $2.42$W and $2.15$W as shown in Fig.~\ref{fig:multigeom}, Fig.~\ref{fig:multiarith0.05} and Fig.~\ref{fig:multiarith0.1} respectively. With more rewards, the workers have higher chance of winning one of the rewards. Hence, to incentivize the workers to allocate for CPU power for the computation subtasks, the master is better off offering multiple rewards than a single reward.

\subsection{Multiple Rewards:}

\begin{figure*}
\centering
\begin{multicols}{3}
\includegraphics[width=\columnwidth]{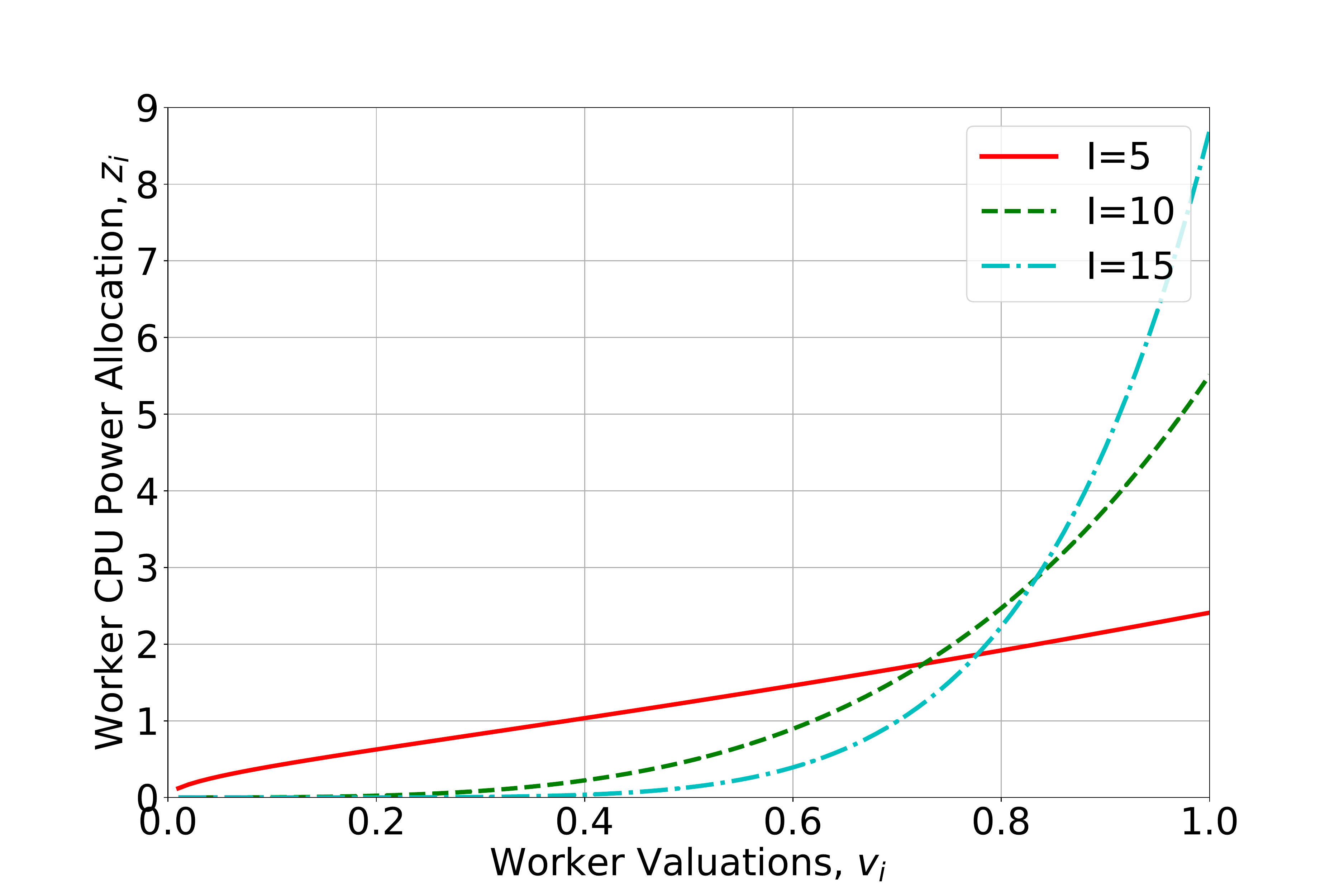} 
	\caption{\footnotesize{The difference $M_{k}-M_{k+1}$ between consecutive reward amounts is 0.05.}}
	\label{fig:multiarith0.05}
\includegraphics[width=\columnwidth]{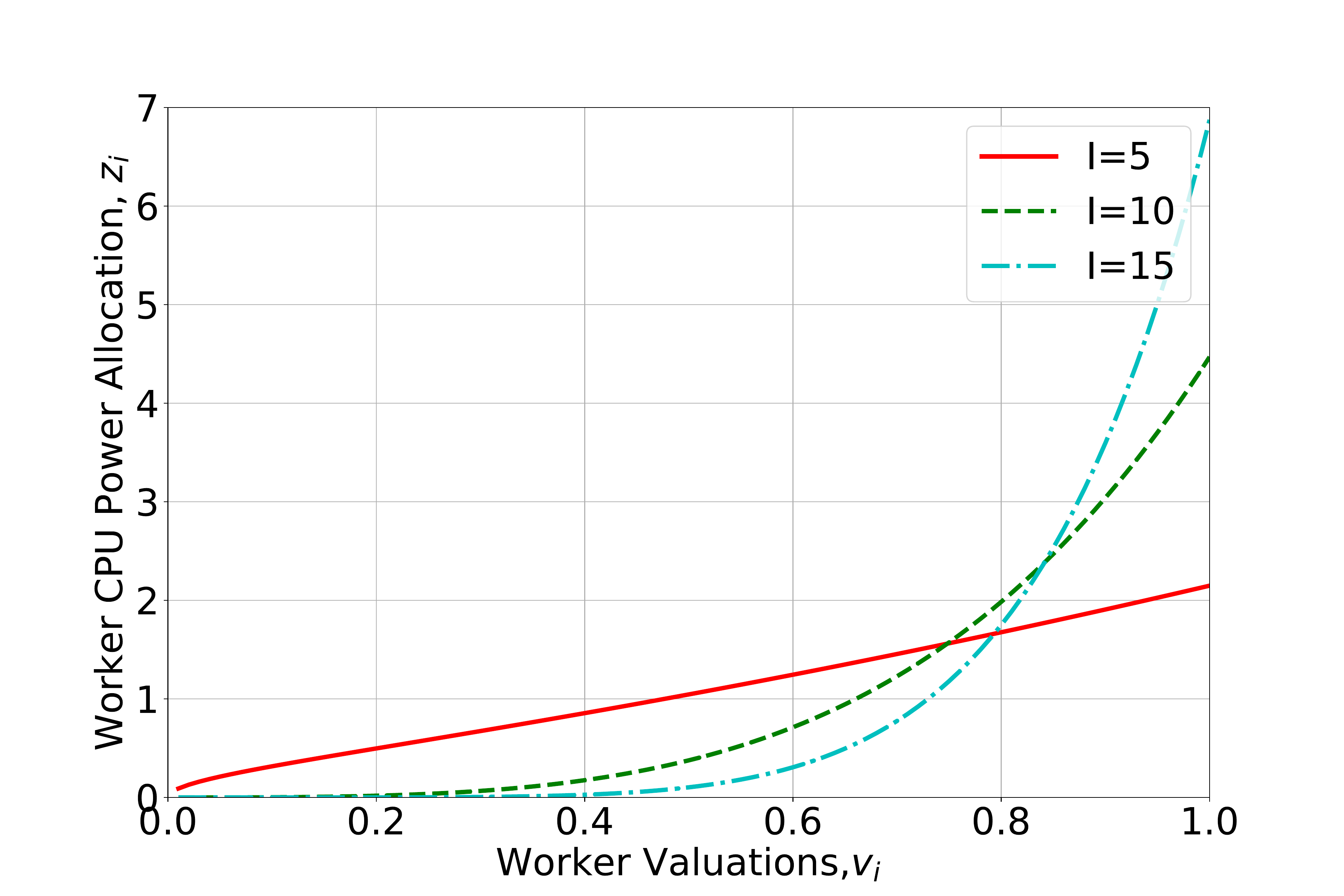} 
    	\caption{\footnotesize{The difference $M_{k}-M_{k+1}$ between consecutive reward amounts is 0.1.}}
    	\label{fig:multiarith0.1} 
\includegraphics[width=\columnwidth]{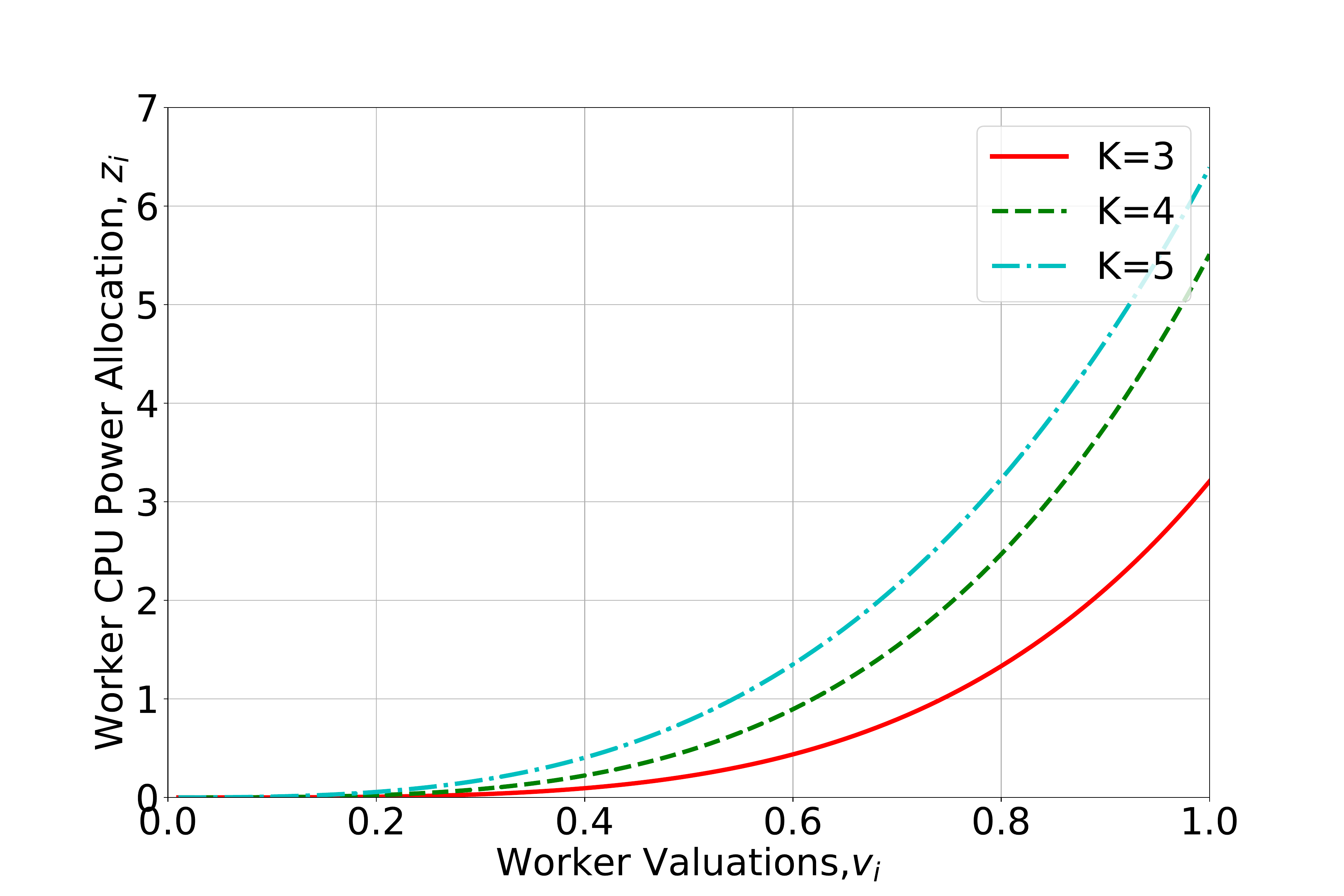}
\caption{\footnotesize{Different number $K$ of rewards, the difference between the consecutive rewards is 0.05, $I=10$.}}
\label{fig:diffrewards}

\end{multicols}
\end{figure*}



Figure~\ref{fig:multigeom}, Fig.~\ref{fig:multiarith0.05} and Fig.~\ref{fig:multiarith0.1} show the allocation of CPU power by the workers under arithmetic and geometric reward sequences. When the difference between the consecutive rewards is smaller, the workers are willing to allocate more CPU power. For example, when the difference between the consecutive rewards is 0.05, i.e., $M_{k}-M_{k+1}=0.05$, $k=1,2,\ldots,K-1$, the worker with valuation of 1 allocates $8.7$W when there are 15 workers competing for 4 rewards. However, under the same setting of 15 workers competing for 4 rewards, the worker with valuation of 1 is only willing to allocate $6.9$W when the difference between the consecutive rewards is 0.1.

\subsection{Effects of Different System Parameters}
Apart from the different reward structures, the workers also behave differently under different system parameter values, e.g., the number of workers and the number of rewards.

\subsubsection{More Workers}

When there is only one reward that is offered to the worker that allocates the largest amount of CPU power, the workers allocate more CPU power when there are 5 workers than that of 15 workers. For example, in Fig.~\ref{fig:onereward}, the worker with a valuation of $0.8$ allocates $0.45$W when there are 5 workers but only allocates $0.08$W when there are 15 workers. When there are more workers participating in the auction, the competition among the workers is stiffer and the probability of winning the reward decreases. As a result, the workers allocate a smaller amount of CPU power.

However, similar trends are only observed for workers with low valuations, e.g., $v_{i}=0.6$, when multiple rewards are offered. When the number of workers increases, the workers with low valuations reduce their allocation of CPU power for the computation subtasks. However, this is not observed for workers with high valuations, e.g., $v_{i}=0.9$. Figure~\ref{fig:multigeom}, Fig.~\ref{fig:multiarith0.05} and Fig.~\ref{fig:multiarith0.1} show that the workers with high valuations allocate more CPU power when there are more workers competing for the multiple rewards offered by the master. Specifically, when the master offers 4 rewards with a difference of 0.05 between the consecutive rewards, the worker with a valuation of $0.9$ allocates $2.16$W when there are 5 workers but allocates $4.16$W when there are 15 workers. When multiple rewards are offered, since it is possible for the workers to still win one of the remaining rewards even if they do not win the top reward, the workers are more willing to allocate their CPU power for the computation subtasks. 
Hence, the workers with high valuations allocate more CPU power to increase their chance of winning the top reward. 

\subsubsection{More Rewards} 

When the number of workers participating in the all-pay auction is fixed, the workers allocate more CPU power when there are more rewards that are offered. It is seen from Fig.~\ref{fig:diffrewards} that when there are 10 workers in the all-pay auction, the worker with a valuation of $0.8$ allocates CPU power of $3.2$W when 5 rewards are offered as compared to $1.3$W and $2.5$W when 3 and 4 rewards are offered respectively.

\section{Conclusion}
In this paper, we proposed an all-pay auction to incentivize workers to participate in the coded distributed computation tasks. Firstly, we use polynomial codes to determine the number of rewards to be offered in the all-pay auction. Then, the master determines the reward structure to maximize its utility given the strategies of the workers. For our future work, we can consider workers with valuations chosen from different probability distributions.

\bibliographystyle{ieeetr}
\bibliography{cdc_allpay-arxiv}

\end{document}